\title{A Uniform One-Dimensional Fragment with Alternation of Quantifiers}
\author{Emanuel Kiero\'nski
\institute{Institute of Computer Science\\University of Wroc\l{}aw, Poland}
\email{emanuel.kieronski@cs.uni.wroc.pl}}
\begin{document}

\newtheorem{theorem}{Theorem}
\newtheorem{lemma}[theorem]{Lemma}
\newtheorem{fact}[theorem]{Fact}
\newtheorem{corollary}[theorem]{Corollary}
\newtheorem{proposition}[theorem]{Proposition}
\newtheorem{claim}[theorem]{Claim}


\newcommand{\FO}{\mbox{\rm FO}}
\newcommand{\FOt}{\mbox{$\mbox{\rm FO}^2$}}
\newcommand{\FOthree}{\mbox{$\mbox{\rm FO}^3$}}
\newcommand{\TGF}{\mbox{$\mbox{\rm TGF}$}}
\newcommand{\GFcp}{\mbox{$\mbox{\rm GF}^{\times_2}$}}
\newcommand{\GFU}{\mbox{$\mbox{\rm GFU}$}}
\newcommand{\UNFO}{\mbox{\rm UNFO}}
\newcommand{\LGF}{\mbox{\rm LGF}}
\newcommand{\GNFO}{\mbox{\rm GNFO}}
\newcommand{\GFTG}{\mbox{\rm GF+TG}}
\newcommand{\GFEG}{\mbox{\rm GF+EG}}
\newcommand{\TGFTG}{\mbox{\rm TGF+TG}}
\newcommand{\GFUTG}{\mbox{\rm GFU+TG}}
\newcommand{\GF}{\mbox{\rm GF}}
\newcommand{\GFt}{\mbox{$\mbox{\rm GF}^2$}}
\newcommand{\FF}{\mbox{\rm FF}}
\newcommand{\UF}{\mbox{$\mbox{\rm UF}_1$}}
\newcommand{\UFAm}{\mbox{$\mbox{\rm AUF}_1^{-}$}}
\newcommand{\UFA}{\mbox{$\mbox{\rm AUF}_1$}}
\newcommand{\UFAthree}{\mbox{$\mbox{\rm AUF}_1^3$}}
\newcommand{\sUF}{\mbox{$\mbox{\rm sUF}_1$}}
\newcommand{\bK}{\mbox{$\bar{\mbox{\rm K}}$}}

\newcommand{\NLogSpace}{\textsc{NLogSpace}}
\newcommand{\LogSpace}{\textsc{LogSpace}}
\newcommand{\NP}{\textsc{NPTime}}
\newcommand{\PTime}{\textsc{PTime}}
\newcommand{\PSpace}{\textsc{PSpace}}
\newcommand{\ExpTime}{\textsc{ExpTime}}
\newcommand{\ExpSpace}{\textsc{ExpSpace}}
\newcommand{\NExpTime}{\textsc{NExpTime}}
\newcommand{\TwoExpTime}{2\textsc{-ExpTime}}
\newcommand{\TwoNExpTime}{2\textsc{-NExpTime}}
\newcommand{\APSpace}{\textsc{APSpace}}
\newcommand{\AExpSpace}{\textsc{AExpSpace}}
\newcommand{\ASpace}{\textsc{ASpace}}
\newcommand{\DTime}{\textsc{DTime}}

\newcommand{\set}[1]{\{#1\}}
\newcommand{\str}[1]{{\mathfrak{#1}}}
\newcommand{\restr}{\!\!\restriction\!\!}
\newcommand{\N}{{\mathbb N}}
\newcommand{\Z}{{\mathbb Z}} 
\newcommand{\sss}{\scriptscriptstyle}
\newcommand{\cT}{\mathcal{T}}
\newcommand{\cL}{\mathcal{L}}
\newcommand{\cF}{\mathcal{F}}
\newcommand{\tpstar}{{\rm tp}^*}
\newcommand{\tp}{{\rm tp}}

\newcommand*{\UU}{\mathsf{U}}
\newcommand*{\NN}{\mathbb{N}}
\newcommand*{\ZZ}{\mathbb{Z}}
\newcommand{\TG}{\mbox{\rm TG}}
\newcommand*{\Lc}{\mathcal{L}}
\newcommand*{\Pc}{\mathcal{P}}
\newcommand*{\Fc}{\mathcal{F}}
\newcommand*{\Mf}{\mathfrak{M}}
\newcommand*{\Nf}{\mathfrak{N}}
\newcommand*{\qf}{\mathrm{qf}}
\newcommand*{\<}{\langle}
\renewcommand*{\>}{\rangle}
\renewcommand{\le}{\leqslant}
\renewcommand{\ge}{\geqslant}
\renewcommand{\bar}{\overline}

\newcommand{\noproof}{\pushQED{\qed} \qedhere \popQED}

\newcommand{\Qfr}{\mbox{Q}}

\newcommand{\phie}{\phi^{\sss \exists}}
\newcommand{\phiu}{\phi^{\sss \forall}}

\newcommand{\psie}{\psi^{\sss \exists}}
\newcommand{\psiu}{\psi^{\sss \forall}}
\newcommand{\mse}{m_{\sss \exists}}
\newcommand{\msu}{m_{\sss \forall}}

\renewcommand{\phi}{\varphi}

\newcommand{\type}[2]{{\rm tp}^{{#1}}({#2})}

\newcommand{\AAA}{\mbox{\large \boldmath $\alpha$}}
\newcommand{\BBB}{\mbox{\large \boldmath $\beta$}}

\maketitle

\begin{abstract}
The uniform one-dimensional fragment of first-order logic was introduced a few years ago as a generalization
of the two-variable fragment of first-order logic to contexts involving relations of arity greater than two. 
Quantifiers in this logic are used in blocks, each block consisting only of existential quantifiers or only of universal 
quantifiers. In this paper we consider the possibility of mixing quantifiers in blocks. We identify
a non-trivial variation of the logic with mixed blocks of quantifiers which retains some  good properties of the two-variable 
fragment and of the uniform one-dimensional fragment: it has the finite (exponential) model property and hence decidable, \NExpTime-complete satisfiability problem.
\end{abstract}

\section{Introduction}
In this paper we are going to push forward the research on the uniform one-dimensional fragment of first-order logic. To
set up the scene and locate our results in a broader context let us first recall some  facts about the two-variable fragment, \FOt{}. \FOt{}, obtained just by restricting first-order logic so
that its formulas may use only variables $x$ and $y$,  is one of the most important decidable fragments of first-order logic identified so far.
The decidability of its satisfiability problem was shown by Scott \cite{Sco62} in the case without equality, and
by Mortimer \cite{Mor75} in the case with equality. In \cite{Mor75} it is proved that the logic has the finite model property, that is, its
every satisfiable formula has a finite model. Later, Gr\"adel, Kolaitis and Vardi \cite{GKV97} strengthened that result, by showing that every satisfiable formula has a model of size bounded exponentially in its length. This exponential model property led to  the \NExpTime{} upper bound on the complexity of \FOt{} satisfiability. The matching lower bound follows from the earlier work by Lewis \cite{Lew80}.

An important  motivation for studying \FOt{} is the fact that it embeds, via the so-called standard translation, basic modal logic and many standard description logics. 
Thus \FOt{} constitutes an elegant first-order framework for those formalisms. However, its simplicity and naturalness make it also
an attractive logic in itself, inheriting potential applications in knowledge representation, artificial intelligence, or verification of
hardware and software from modal and description logics. Plenty of results on \FOt{}, its extensions and variations
have been obtained in the last few decades, e.g., decidability was shown for \FOt{} with counting quantifiers \cite{GOR97,PST97,P-H10}, one or two equivalence relations \cite{KO12,KMPHT14}, counting quantifiers and equivalence relation \cite{PH15}, betweenness relations \cite{KLPS20}, its complexity was established on words and trees, in various scenarios including the presence of data or counting \cite{BMS09,BDM11,CW13,CW15,BBC16},  to mention just a few of them.

 However, further applications, e.g., in database theory are limited by the fact that \FOt{} and its extensions mentioned above
 can speak non-trivially only about
relations of arity at most two. This is in contrast to some other decidable fragments studied because of their potential applications in computer science, 
like the guarded fragment, \GF{} \cite{ABN98},
the unary negation fragment, \UNFO{} \cite{StC13}, the guarded negation fragment, \GNFO{} \cite{BtCS15}, or the fluted fragment, \FF{} \cite{Q69,P-HST19}. 

A natural question is whether there is an elegant decidable formalism which retains full expressivity of \FOt{}, but additionally, allows one to speak
non-trivially about relations of arity  bigger than two. In the recent literature we 
can find a few such formalisms. 

An interesting idea is for example to combine  \FOt{} with \GF{}. The idea can be traced back already in Kazakov's PhD thesis \cite{Kaz06}, was present in the  work
by Bourish, Morak and Pieris  \cite{BMP17}, and found a more systematic treatment in the paper by 
Rudolph and \v{S}imkus \cite{RS18}, who  formally introduced the triguarded fragment, \TGF{}.  \TGF{} is obtained from
\GF{} by allowing quantification for subformulas with at most two free variables to be unguarded.
What we get this way is a logic in which one can  speak freely about pairs of elements, and in a local, guarded way about
tuples of bigger arity. 
\TGF{} turns out to be undecidable with equality, but becomes decidable when equality is forbidden. The satisfiability problem
is then \TwoExpTime{}- or \TwoNExpTime-complete, depending on whether constants are allowed in signatures \cite{RS18}; the finite model property is
retained \cite{KR21}.
A variation of the idea above is the one-dimensional triguarded fragment \cite{Kie19}, still containing \FOt{}, which becomes decidable even in the presence of equality.

\FOt{} (or, actually, even its extension with counting quantifiers, C$^2$) was also combined with   \FF{} by
 Pratt-Hartmann \cite{P-H21}. This logic was shown decidable but the complexity of its satisfiability problem is non-elementary,
as already \FF{} alone has non-elementary complexity
\cite{P-HST19}.

Finally, probably the most canonical extension of \FOt{} to contexts with relations of arity bigger than two, capturing the spirit of \FOt{} more closely than
the logics discussed above, is the \emph{uniform one-dimensional fragment}, \UF{}, proposed by Hella and Kuusisto \cite{HK14}.
 In this fragment quantifiers are used in blocks and a single block is built out only of existential or only of universal quantifiers and leaves
at most one variable free; a fragment meeting this condition is called \emph{one-dimensional}. Imposing one-dimensionality alone is not sufficient for ensuring the
decidability of the satisfiability problem and thus another restriction, \emph{uniformity}, is applied which, roughly speaking,
allows boolean combinations of atoms only if the atoms use precisely the same set of variables or use just one variable.
In effect, just as \FOt{} contains modal logic (or even Boolean modal logic), \UF{} contains \emph{polyadic} modal logic (even with negations of the accessibility relations) (cf.~\cite{Kuu16}).
 In \cite{HK14} it is shown that
\UF{} without equality is decidable and has the finite model property. In \cite{KK14} this result is improved by showing that the decidability
is retained even if  {free} use of equalities is
allowed (by \emph{free use} of equalities we mean that they need not obey the uniformity restriction) and that the logic has exponential model property and  \NExpTime-complete satisfiability problem, exactly as \FOt{}.

A question arises whether the requirement that the blocks of quantifiers from the definition of \UF{} must consist of quantifiers of the same type (all universal or all existential) is necessary
for decidability, that is what happens if we allow one to mix quantifiers as, e.g., in the formula $\forall x \exists y \forall z R(x,y,z,t)$.
Let us denote the
extension of \UF{} allowing to alternate  quantifiers in blocks  $\UFA$. 
The motivations behind studying \UFA{} are multifarious. \UF{} lies very close to the borderlines between the decidable and undecidable, so,
firstly and most importantly, analysing its expressive extensions may enhance our understanding of these borderlines which may be also useful in different scenarios.
Secondly, the logics \UF{} and \UFA{} can be useful themselves, offering  extensions of modal and description logics
to contexts with relations of arity greater than two, such as databases, orthogonal to other proposals. Thirdly, though it is of course a matter of 
taste, we believe that \UFA{} is just quite an elegant formalism, which can be justified by a relative simplicity of its definition and a nice game-theoretic characterization of its expressivity---natural Ehrenfeucht-style games for \UF{} were introduced in \cite{KK14}; shifting to \UFA{} would probably allow for an even nicer game characterizations
(though this topic is not formally studied in this paper).

The first step to understand \UFA{} was done in the companion paper \cite{FK23},  where we show the decidability and the finite model property of the three variable restriction of this logic, \UFAthree{}; in that paper \UFAthree{} is then  made a basis for obtaining a rich decidable subclass of the three-variable fragment, \FOthree{}.

Turning now to our current contribution, we first remark that in this paper we still do not answer the question whether the whole \UFA{} has decidable satisfiability. We however make another step towards understanding \UFA{} by identifying its fragment,  \UFAm{}, which contains full \FOt{} without equality, allows for mixed blocks of quantifiers of unbounded length,
has \NExpTime-complete satisfiability
problem, and has the exponential model property. Additionally, we observe that if we allow for a free use of equality in \UFAm{} then we lose the 
finite model property. 

The main restriction of \UFAm{}, compared to full \UFA{}, is that it admits only blocks of quantifiers that are purely universal or end with the existential quantifier. Additionally, mostly for the clarity of presentation, we will define \UFAm{} not as an extension of the version of \UF{} originally defined in \cite{HK14}, 
but rather as an extension of  the \emph{strongly} uniform one-dimensional fragment, \sUF{}, introduced in \cite{KK15}. 
The definition of \UFAm{} is inspired by the definition of the Maslov class $\bK$ \cite{Mas71} and, as we will see in a moment, the
decidability of \UFAm{} can be shown by a reduction to conjunctions of sentences in $\bK$, whose decidability was
shown by the resolution method by Hustadt and Schmidt \cite{HS99}.
However, this reduction  does not allow us to establish the precise compleixty of \UFAm{}, since, to the best of our knowledge, the precise complexity of the Maslov class has not been established. It is also not known whether \bK{} has the finite model property.

\section{Preliminaries}

\subsection{Notation and terminology} 
We assume that the reader is familiar with first-order logic.
We work with purely relational signatures with no constants nor  function symbols. 
We refer to structures using Fraktur capital letters, and to their domains using
the corresponding Roman capitals. Given a structure $\str{A}$ and some $B \subseteq A$ we
denote by $\str{A} \restr B$ the restriction of $\str{A}$ to its subdomain $B$. 

We usually use $a, b, \ldots$ to denote elements of structures, and $x$, $y$, $\ldots$ for
variables; all of these possibly with some decorations.
For a tuple of variables $\bar{x}$ we use $\psi(\bar{x})$ to denote that the free variables
of $\psi$ are in $\bar{x}$.

In the context of  uniform logics it is convenient to speak about some partially defined (sub)structures which we will call \emph{pre-(sub)structures}.
A pre-structure over a signature $\sigma$ consists of its domain $A$ and a function specifying the truth-value of every fact
$P(\bar{a})$, for $P \in  \sigma$ and a tuple $\bar{a}$ of elements of $A$ of length equal to the arity of $P$, such that 
$\bar{a}$ contains all elements of $A$ or just one of them. The truth values of all the other facts remain unspecified.
We will use Fraktur letters decorated with $*$ to denote pre-structures:
a pre-structure with domain $A$ will be denoted by $\str{A}^*$. If a structure $\str{A}$ is fully defined, $\str{A}^*$ denotes its induced pre-structure.
Similarly, if $B \subseteq A$ is a subdomain of some structure $\str{A}$ we donote by $\str{B}^*$ the pre-structure $(\str{A} \restr B)^*$
and call it a pre-substructure of $\str{A}$.

An (atomic) $1$-{\em type} over a signature $\sigma$ is a
maximal consistent set of atomic or negated atomic formulas  over
$\sigma$ using at most one variable $x$. We often identify a $1$-type with the conjunction of its elements. 
We will usually be interested in $1$-types over signatures $\sigma$ consisting of the
relation symbols used in some given formula. Observe that the number of $1$-types is bounded by a function which is exponential in $|\sigma|$, and hence also in the length
of the formula. This is because a $1$-type just corresponds to a subset of $\sigma$.

Let $\str{A}$ be a structure, and let
$a \in A$. We denote by $\type{\str{A}}{a}$ the unique atomic
1-type \emph{realized} in $\str{A}$ by the element $a$, \emph{i.e.}, the $1$-type $\alpha(x)$ such that $\str{A} \models \alpha(a)$.

\subsection{Satisfiability and finite model property}

Let $\mathcal{L}$ be a class of first-order formulas (a logic). The \emph{(finite) satisfiability problem} for $\mathcal{L}$
takes as its input a sentence from $\mathcal{L}$ and verifies if it has a (finite) model.
$\mathcal{L}$ has the \emph{finite model property} if  every satisfiable sentence in $\mathcal{L}$ has a finite model; $\mathcal{L}$ has the \emph{exponential model property} if there is a fixed exponential function $f$ such that
every satisfiable sentence $\varphi$ has a finite model over a domain whose size if bounded by $f(|\varphi|)$
(where the length of $\varphi$, $|\varphi|$, is measured in any reasonable fashion).

\subsection{Logics}

As the starting point we define the logic s\UF{} (without equality), called in \cite{KK15} the \emph{strongly restricted uniform one-dimensional fragment}.
Formally, for a relational signature $\sigma$, the set of $\sigma$-formulas of s\UF{} is the smallest set  $\cF$ such that:
\begin{itemize} \itemsep0pt
\item every $\sigma$-atom using at most one variable is in $\cF$
\item $\cF$ is closed under Boolean connectives
\item if $\phi(x_0, \ldots, x_k)$ is a Boolean combination of formulas in $\cF$ with free variables in $\{x_0, \ldots, x_k\}$
and atoms\footnote{Please note that those atoms need not to belong to $\mathcal{F}$.} built out of precisely all of the variables $x_0, \ldots, x_k$ (in an arbitrary order, possibly with repetitions)
then $\exists x_0, \ldots, x_k \phi$, $\exists x_1, \ldots, x_k \phi$,  $\forall x_0, \ldots, x_k \phi$ and $\forall x_1, \ldots, x_k \phi$ are in $\cF$.  
\end{itemize}

Example formulas in s\UF{} are: 
$$\forall xyz (P(x) \wedge P(y) \wedge P(z) \rightarrow R(x,y,z) \vee \neg S(z,z,x,y))$$
$$\forall x (P(x) \rightarrow \exists yz (\neg R(y,z,x) \wedge (\neg R(x,y,z) \vee P(y))))$$

For interested readers we say that (non-restricted) \emph{uniform one-dimensional fragment} is defined as above but in the last point of the definition
the non-unary atoms must not  necessarily use the whole set $\{x_0, \ldots, x_k\}$ of variables but rather all those atoms use the same subset of this set (see \cite{HK14}).

By \UFA{} we denote the extension of s\UF{} without equality with \emph{alternation of quantifiers in blocks}. The set of $\sigma$-formulas of \UFA{} is the smallest
set $\cF$ such that: 

\begin{itemize} \itemsep0pt
\item every $\sigma$-atom using at most one variable is in $\cF$
\item $\cF$ is closed under Boolean connectives
\item if $\phi(x_0, \ldots, x_k)$ 
is a Boolean combination of formulas in $\cF$ with free variables in $\{x_0, \ldots, x_k\}$
and atoms built out of precisely all of the variables $x_0, \ldots, x_k$ (in an arbitrary order, possibly with repetitions)
then $\Qfr_0 x_0  \ldots \Qfr_k x_k \phi$ and $\Qfr_1 x_1 \ldots \Qfr_k x_k \phi$ are in $\cF$, where each $\Qfr_i$ is one of $\exists, \forall$.  
\end{itemize}

Finally, we define a subset \UFAm{} of \UFA{} by requiring that its formulas are written in negation normal form NNF (that is negation is used only
in front of atomic formulas, and the only other Boolean connectives are $\vee$ and $\wedge$), and that every sequence of quantifiers in the last point of the definition
either contains only universal quantifiers or the last quantifier $\Qfr_k$ is existential. In \UFAm{} we can write, e.g.:
$$\forall xy \exists z (\neg P(x) \wedge \neg P(y) \vee R(x,y,z))$$
$$\forall x (P(x) \vee \exists y \forall z \exists t S(x,y,z,t))$$
$$\forall xyz ( R(x,y,z) \vee \exists t T(x,t) \wedge \exists t T(y,t) \wedge \exists t T(z,t))$$ 

Observe that \UFAm{} contains the whole s\UF{} and hence also \FOt{}. (For example the \FOt{} sentence $\exists x \forall y \psi(x,y)$ belongs
to s\UF{}, since one may think that it has two blocks of quantifiers, both of length one, and indeed one of them is purely universal and
the other ends with $\exists$.)

\subsection{Normal forms and basic decidability result}
We introduce normal form for \UFAm{} formulas, generalizing Scott's normal form for \FOt{} (cf.~\cite{Sco62,GKV97}). We start with a version
involving $0$-ary predicates, called \emph{weak} normal form, and then explain how to remove them. In our normal form as well as
in some intermediate formulas 
we allow ourselves to use implications which are usually not allowed in NNF formulas, but here they are very natural (note 
that converting them to disjunctions using the basic law $p \rightarrow q \equiv \neg p \vee q$ will not affect the blocks
of quantifiers).
We say that a \UFAm{} sentence is in \emph{weak normal form} if it is a conjunction of formulas having one of the following shapes.
\begin{eqnarray} 
&&\Qfr_1x_1 \ldots \Qfr_kx_k \psi(x_1,\ldots, x_k),
\label{eq:nf1a}\\
&&E \rightarrow  \Qfr_1x_1 \ldots \Qfr_kx_k \psi(x_1,\ldots, x_k),
\label{eq:nf1b}
\end{eqnarray}
where
\begin{itemize} \itemsep0pt
 \item $k \ge 0$ is a natural number 
  \item the $x_i$ are distinct variables
	\item $\psi$ is a quantifier-free \UFAm{} formula (Boolean combination of atoms, each of them containing all the variables
	$x_1, \ldots, x_k$ or at most one of them)
	\item  every $\Qfr_i$ is a quantifier (universal or existential) and either all the $\Qfr_i$ are universal (\emph{universal conjunct}) or $\Qfr_k$ is existential (\emph{existential conjunct})
	\item $E$ is a $0$-ary relation symbol
\end{itemize}

In particular, in a formula of type (\ref{eq:nf1a}), $k$ may be equal to $0$; in this case $\psi$ is a Boolean combination of $0$-ary predicates.
\begin{lemma} \label{l:nf}
Let $\phi$ be a \UFAm{} sentence. Then there exists a polynomially computable  \UFAm{} sentence $\phi'$ in weak normal form over a signature extending the
signature of $\phi$ by  some fresh unary and $0$-ary relation symbols, such that (i) every model of $\phi$ can be expanded to a model of $\phi'$ and (ii) every 
model of $\phi'$ is a model of $\phi$. 
\end{lemma}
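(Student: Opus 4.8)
The plan is to imitate the construction of Scott's normal form for \FOt{} (cf.~\cite{Sco62,GKV97}), adjusting it to the block structure of \UFAm{}. The key facilitating feature is that \UFAm{}-sentences are already in NNF, so that every subformula occurs positively; this means that each auxiliary predicate can be introduced with a single one-directional axiom rather than a biconditional.

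Call a subformula \emph{quantified} if its outermost connective is the leading quantifier of a block $\Qfr_1 x_1 \ldots \Qfr_k x_k$ or $\Qfr_0 x_0 \ldots \Qfr_k x_k$; by the syntax of \UFAm{} each quantified subformula has at most one free variable. To each quantified subformula $\xi$ I would attach a fresh relation symbol: a unary $P_\xi$ when $\xi$ has a free variable, a $0$-ary $E_\xi$ when $\xi$ is a sentence. For any formula $\psi$ let $\psi^{\flat}$ denote the quantifier-free formula obtained by replacing every maximal quantified subformula $\xi$ of $\psi$ by $P_\xi(z)$, $z$ being the free variable of $\xi$ (respectively by $E_\xi$); and for a quantified subformula $\xi = \Qfr\,\bar x\,\chi$ put $\hat\xi = \Qfr\,\bar x\,\chi^{\flat}$. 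The output is
\begin{equation*}
\phi' \;=\; \phi^{\star} \;\wedge\; \bigwedge_{\xi} \delta_\xi ,
\end{equation*}
where $\xi$ ranges over the quantified subformulas of $\phi$ (omitting $\phi$ itself when it is quantified), where $\phi^{\star}$ equals $\hat\phi$ if $\phi$ is quantified and equals $\phi^{\flat}$ — a Boolean combination of $0$-ary atoms, hence an instance of~(\ref{eq:nf1a}) with $k=0$ — otherwise, and where the defining conjunct is $\delta_\xi = \forall x_0\,\Qfr_1 x_1 \ldots \Qfr_k x_k\,(\neg P_\xi(x_0) \vee \chi^{\flat})$ when $\xi = \Qfr_1 x_1 \ldots \Qfr_k x_k\,\chi$ has free variable $x_0$, and $\delta_\xi = E_\xi \to \Qfr_0 x_0 \ldots \Qfr_k x_k\,\chi^{\flat}$ when $\xi = \Qfr_0 x_0 \ldots \Qfr_k x_k\,\chi$ is a sentence.

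Next I would verify that $\phi'$ is in weak normal form. The $\delta_\xi$ of the second kind are literally of the shape~(\ref{eq:nf1b}), and $\phi^{\star}$ is of the shape~(\ref{eq:nf1a}); for the $\delta_\xi$ of the first kind, pushing $\forall x_0$ inside the block is sound since $x_0$ does not occur in $\Qfr_1 x_1 \ldots \Qfr_k x_k$. The (admittedly mild) crux, and the step I would be most careful about, is to check that the resulting block still respects the discipline required of normal-form conjuncts, namely that the prefix is all-universal or ends with $\exists$: this is inherited from $\xi$, which obeyed this constraint already as a \UFAm{}-subformula (either $\Qfr_1, \ldots, \Qfr_k$ all universal, or $\Qfr_k = \exists$), and prepending a universal quantifier leaves both alternatives intact; one must additionally observe that $\neg P_\xi(x_0) \vee \chi^{\flat}$ is a legal uniform matrix over $x_0, \ldots, x_k$, which holds because the non-unary atoms inside $\chi$ already used precisely all of $x_0, \ldots, x_k$ while $P_\xi$ and every fresh symbol occurring in $\chi^{\flat}$ is unary or $0$-ary, so the uniformity condition is untouched. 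Polynomial computability is then clear: $\phi$ has $O(|\phi|)$ quantified subformulas and each $\delta_\xi$ has length $O(|\phi|)$.

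Finally I would prove (i) and (ii). For (i): given $\str{A} \models \phi$, expand it by setting $P_\xi^{\str{A}} = \{a : \str{A} \models \xi(a)\}$ and letting $E_\xi$ record the truth value of $\xi$ in $\str{A}$; since $\xi$ uses only the original symbols, in the expansion $P_\xi(z)$ and $E_\xi$ are equivalent to $\xi$, hence $\chi^{\flat}$ is equivalent to $\chi$ and $\hat\xi$ to $\xi$ for every subformula, so each $\delta_\xi$ holds and $\phi^{\star}$ is equivalent to $\phi$; thus the expansion models $\phi'$. For (ii): given $\str{B} \models \phi'$, an induction on quantifier-nesting depth, from the deepest quantified subformulas outwards, using the axioms $\delta_\xi$, yields $\str{B} \models P_\xi(a) \to \xi(a)$ for all $a$ and $\str{B} \models E_\xi \to \xi$ (for the inductive step one rewrites $\delta_\xi$ as $\forall x_0(P_\xi(x_0) \to \hat\xi)$, substitutes each deeper $\xi'$ for its defining predicate inside $\hat\xi$ using the inductive hypothesis, and so obtains $\hat\xi \to \xi$); since $\phi$ is in NNF these predicates occur only positively in $\phi^{\star}$, so substituting each $\xi$ back for its defining predicate — which it implies — preserves satisfaction, and we conclude $\str{B} \models \phi$. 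Both inductions are entirely routine.
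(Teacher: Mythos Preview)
Your proposal is correct and follows essentially the same Scott-style reduction as the paper: replace each inner quantifier block by a fresh unary or $0$-ary predicate, add a one-directional defining axiom for it, and rely on NNF to make the single implication suffice. The only cosmetic difference is that you present the transformation globally (via the $(\cdot)^{\flat}$ and $\hat{(\cdot)}$ operators) whereas the paper describes the equivalent iterative innermost-first process; your explicit check that prefixing $\forall x_0$ preserves the ``all-universal or ends with $\exists$'' discipline and the uniformity of the matrix is a welcome addition that the paper leaves implicit.
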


\begin{proof} (Sketch)
Assume that $\phi$ is in NNF.
Take an innermost subformula $\psi_0$ starting with a maximal block of quantifiers. If it has a free variable, that is, is of the form $$\Qfr_1x_1 \ldots \Qfr_kx_k \psi(x_1,\ldots, x_k,y)$$ 
replace it by $P(y)$, for a fresh unary symbol $P$, and add the following normal form conjunct $\phi_{\psi_0}$ (partially) axiomatising $P$.
$$\forall y \Qfr_1x_1 \ldots \Qfr_kx_k (P(y) \rightarrow \psi(x_1, \ldots, x_k, y)).$$ 
In other words, $\phi$ is replaced by $\phi(P(y)/\psi_0) \wedge \phi_{\psi_0}$. 

If $\psi_0$  is a proper subsentence, that is it is of the form 
$$\Qfr_1x_1 \ldots \Qfr_kx_k \psi(x_1,\ldots, x_k),$$ then replace it by $E$, for a fresh $0$-ary symbol $E$ and add the conjunct
$$E \rightarrow \Qfr_1x_1 \ldots \Qfr_kx_k \psi(x_1,\ldots, x_k).$$

Repeat this process as long as possible. Note that we indeed append conjuncts belonging
to \UFAm{}.

The above process is similar to Scott's reduction of \FOt{} formulas to their normal form. Besides the natural modifications needed to deal with
sequences of quantifiers rather than with single quantifiers, the main difference is that in the appended conjuncts
axiomatizing the freshly introduced unary and $0$-ary predicates, we write
 implications in only one direction. 
This is sound as our initial formula is assumed to be in NNF. Indeed, consider a single step of the reduction, assuming that the case with a free variable in $\psi_0$ applies. In this step $\phi'=\phi(P(y)/\psi_0) \wedge \phi_{\psi_0}$ is produced from $\phi$. Assuming that $\str{A} \models \phi$ we obtain a model $\str{A}'$ of
$\phi'$ by making the unary relation $P$ true at all elements $a$ of $A$ such that $\str{A} \models \psi_0[a]$.
This makes the appended conjunct $\phi_{\psi_0}$  true; obviously, also $\phi(P(y)/\psi_0)$ remains true. 
In the opposite direction assume $\str{A}' \models \phi'$. It may happen that the subformula $\psi_0(y)$  of $\phi$
is true in $\str{A}'$ in more points than $P(y)$ is. However, to guarantee that $\phi$ is true it suffices that it is true \emph{at least} at those
points where $P(y)$ is, which is ensured by the appended conjunct $\phi_{\psi_0}$; this is because $\phi$ is assumed to be in NNF and thus
$\psi_0$ appears in $\phi$ in the scope of no negation symbol. We reason similarly for the case when $\psi_0$ is a subsentence. 
\end{proof}

For our purposes, that is showing the finite model property for \UFAm{} and demonstrating that its satisfiability problem is in \NExpTime{}, we can further simplify our formulas,
by eliminating $0$-ary predicates. What we can do is to guess the truth values for all the $0$-ary predicates and replace them by $\top$ or
$\bot$, in accordance with the guess. In particular, the conjuncts of the form $E \rightarrow  \Qfr_1x_1 \ldots \Qfr_kx_k \psi(x_1,\ldots, x_k)$ are eliminated if $E$ is guessed to be $\bot$ and replaced just by $\Qfr_1x_1 \ldots \Qfr_kx_k \psi(x_1,\ldots, x_k)$ if $E$ is guessed
to be $\top$.

It is convenient to split the set of the resulting conjuncts into those whose all quantifiers are universal and those which end with the existential 
quantifier. We say that a sentence $\phi$ is in \emph{normal form} if it is of the following shape:
\begin{eqnarray} \label{eq:normal}
&&\bigwedge_{1\le i \le \mse} \phie_i
\wedge \bigwedge_{1\le i \le \msu}  \phiu_i,
\end{eqnarray}
where   $\phie_i= \Qfr_1^i x_1 \Qfr_2^i x_2 \ldots \Qfr_{k_i{-}1}^i  x_{k_i{-}1} \exists  x_{k_i} \psie_i,$
$\phiu_i=\forall x_1 \ldots x_{l_i} \psiu_i,$
for  $\Qfr_j^i \in \{ \forall, \exists \}$, $\psie_i=\psie_i(x_1, x_2, \ldots, x_{k_i})$ and $\psiu_i=\psiu_i(x_1, \ldots, x_{l_i})$.

\medskip
The discussion above justifies the following.

\begin{lemma}\label{l:nf2}
(i)
The satisfiability problem for \UFAm{} can be reduced in nondeterministic polynomial time to the satisfiability problem
for normal form \UFAm{} sentences.
(ii) If the class of all normal form \UFAm{} sentences has the finite (exponential) model property then also the whole \UFAm{} has the
finite (exponential) model property. 
\end{lemma}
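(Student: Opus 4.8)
The plan is to obtain both parts by chaining Lemma~\ref{l:nf} together with the guessing procedure for $0$-ary predicates sketched above, and then checking that what survives the guess is literally a sentence of the form~\eqref{eq:normal}.

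For part (i), given a \UFAm{} sentence $\phi$ I would first apply Lemma~\ref{l:nf} to compute in polynomial time a weak normal form sentence $\phi'$; by parts (i) and (ii) of that lemma $\phi$ and $\phi'$ are equisatisfiable (models of $\phi$ expand to models of $\phi'$, and models of $\phi'$ are already models of $\phi$). Next I would nondeterministically guess an assignment $\nu$ of truth values to all $0$-ary relation symbols occurring in $\phi'$; there are only linearly many of them (the fresh ones introduced by the reduction, plus any already in the signature of $\phi$), so the guess has polynomial size. Let $\phi'_\nu$ be the result of replacing every $0$-ary atom $E$ by $\top$ or $\bot$ according to $\nu$ and simplifying: a conjunct of type~\eqref{eq:nf1b} is deleted when $\nu(E)=\bot$ and replaced by its consequent when $\nu(E)=\top$; the constants $\top,\bot$ are propagated through the Boolean connectives inside the matrices $\psi$; a conjunct of type~\eqref{eq:nf1a} with $k=0$ collapses to $\top$ or $\bot$; and if any matrix collapses to $\bot$ under a (necessarily satisfiable, as the domain is nonempty) prefix of quantifiers, the guess is inconsistent and is discarded. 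One verifies that whenever the guess survives, each remaining conjunct has shape~\eqref{eq:nf1a} with $k\ge 1$, and, since its prefix is either purely universal or ends with $\exists$, it is one of the $\phiu_i$ or one of the $\phie_i$; grouping the two kinds produces exactly the normal form~\eqref{eq:normal}. Correctness is immediate: a model of $\phi'$ determines the unique $\nu$ given by the truth values it assigns to the $0$-ary symbols, and is then a model of $\phi'_\nu$; conversely any model of any $\phi'_\nu$ expands, by interpreting the $0$-ary symbols as prescribed by $\nu$, to a model of $\phi'$. Hence $\phi$ is satisfiable iff some guess yields a satisfiable normal form sentence, and the whole procedure is a nondeterministic polynomial-time reduction.

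For part (ii), assume the class of normal form \UFAm{} sentences has the finite (exponential) model property with bounding function $f$, and let $\phi$ be satisfiable. Running the reduction of part (i), some guess produces a satisfiable normal form sentence $\phi'_\nu$ with $|\phi'_\nu|\le p(|\phi|)$, where $p$ is the polynomial bounding the blow-up of the reduction. By assumption $\phi'_\nu$ has a finite model of size at most $f(|\phi'_\nu|)\le f(p(|\phi|))$, which in the exponential case is still exponential in $|\phi|$. Reversing the two steps of the reduction leaves the domain untouched: expanding this model by the $0$-ary interpretation $\nu$ gives a model of $\phi'$ of the same size, which by Lemma~\ref{l:nf}(ii) is already a model of $\phi$. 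So $\phi$ has a finite model, of size exponential in $|\phi|$ in the exponential case.

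I do not expect a genuine obstacle here; the only points that need care are the degenerate cases in the simplification after guessing (conjuncts with $k=0$, matrices collapsing to $\top$ or $\bot$, keeping track of which conjuncts are kept, dropped, or force inconsistency) and the trivial observation that composing the exponential bound $f$ with the polynomial size blow-up of the reduction still yields an exponential bound.
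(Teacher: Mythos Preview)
Your proposal is correct and follows exactly the approach the paper intends: the paper does not even spell out a proof for Lemma~\ref{l:nf2}, it just writes ``The discussion above justifies the following,'' referring to Lemma~\ref{l:nf} together with the guessing-and-substituting of $0$-ary predicates described immediately before the statement. You have faithfully filled in those details, including the degenerate cases and the observation that an exponential bound composed with a polynomial blow-up stays exponential.
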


The reduction to normal form described above allows us to easily prove the decidability of the satisfiability problem for \UFAm{}. 
This can be done by using the results on  the Maslov Class $\bar{\text{K}}$ (which is a dual of the Maslov class $\text{K}$). Full definition of $\bar{\text{K}}$ is quite
complicated and can be found, e.g., in \cite{HS99}. For our purposes it is sufficient to say that when converted to prenex
form $\bar{\text{K}}$ formulas look as follows:
\begin{eqnarray}
\label{eq:maslov}
\exists y_1 \ldots \exists y_m \forall x_1 \ldots \forall x_k \Qfr_1 z_1 \ldots \Qfr_l z_l \psi,
\end{eqnarray}
where the $\Qfr_i$ are quantifiers, $\psi$ is a quantifier-free formula without equality  
and every atom of $\psi$ satisfies one of the following conditions: (i) it contains at most one $x_i$- or $z_i$-variable, (ii) it 
contains all the $x_i$-variables and no $z_i$-variables, or (iii) it contains an existentially quantified  variable $z_j$ and no $z_i$-variables
with $i>j$.

Now, one easily observes that every \UFAm{}-normal form conjunct belongs to $\bar{\text{K}}$.  Indeed, every $\phiu_i$-conjunct is of the form (\ref{eq:maslov}) with $m=l=0$ and its every atom  satisfies either condition (i) or (ii); every $\phie_i$-conjunct is of the form (\ref{eq:maslov}) with $m=k=0$ and every atom satisfying (i) or (iii).

Hence any normal form formula belongs to 
 $\bar{\text{DK}}$,  the class of conjunctions of formulas in $\bar{\text{K}}$. The satisfiability problem for 
$\bar{\text{K}}$ was shown to be decidable in \cite{Mas71}. This result was extended to  the  class $\bar{\text{DK}}$ in \cite{HS99}. This gives us the basic decidability result.
 
\begin{theorem}
The satisfiability problem for \UFAm{} is decidable.
\end{theorem}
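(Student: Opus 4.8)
The plan is to reduce the satisfiability problem for \UFAm{} to satisfiability in the Maslov class and then quote known decidability results. First I would apply Lemma~\ref{l:nf} to replace an arbitrary input \UFAm{} sentence $\phi$ by an equisatisfiable sentence in weak normal form, and then Lemma~\ref{l:nf2}(i) to eliminate the $0$-ary predicates by nondeterministically guessing their truth values, which leaves a sentence in normal form~(\ref{eq:normal}). Since this whole transformation runs in nondeterministic polynomial time, it suffices to decide satisfiability of normal-form sentences.

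Next I would verify that each conjunct of a normal-form sentence, read as a prenex formula, falls into $\bar{\text{K}}$, i.e.\ matches the shape~(\ref{eq:maslov}). A universal conjunct $\phiu_i = \forall x_1 \ldots x_{l_i}\, \psiu_i$ matches with the leading existential prefix and the trailing mixed block both empty ($m=l=0$): every atom of $\psiu_i$ uses all of $x_1,\ldots,x_{l_i}$ or at most one variable, so conditions (i)/(ii) hold. An existential conjunct $\phie_i = \Qfr_1^i x_1 \ldots \Qfr_{k_i-1}^i x_{k_i-1} \exists x_{k_i}\, \psie_i$ matches with $m=k=0$, so that the mixed $\Qfr z$-block of~(\ref{eq:maslov}) is exactly $\Qfr_1^i x_1 \ldots \exists x_{k_i}$; here every atom of $\psie_i$ either uses a single variable (condition (i)) or uses all of $x_1,\ldots,x_{k_i}$, in which case it contains the existentially quantified variable $x_{k_i}$ and no variable occurring later in the block, which is condition~(iii). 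Thus each conjunct lies in $\bar{\text{K}}$, so a normal-form sentence belongs to $\bar{\text{DK}}$, the class of finite conjunctions of $\bar{\text{K}}$ sentences. The step demanding most care is exactly this matching: one must invoke the \UFAm{} restriction that a block of quantifiers is either purely universal or ends with the unique existential quantifier, since it is precisely what guarantees that condition~(iii) applies uniformly to all ``full'' atoms of $\psie_i$ --- a block ending with a universal quantifier but containing an existential one would already escape $\bar{\text{K}}$.

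Finally I would invoke the decidability of $\bar{\text{K}}$ due to Maslov~\cite{Mas71}, together with its extension to $\bar{\text{DK}}$ by Hustadt and Schmidt~\cite{HS99}, which decides satisfiability of normal-form sentences and hence, by the reduction above, of \UFAm{}. I should stress that this route says nothing about the complexity of \UFAm{} or about a finite model property, since the corresponding questions for $\bar{\text{K}}$ are, to our knowledge, open; these refinements are obtained separately in the sequel.
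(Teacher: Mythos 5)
Your proposal is correct and follows essentially the same route as the paper: reduce to normal form via Lemmas~\ref{l:nf} and~\ref{l:nf2}, observe that universal conjuncts match~(\ref{eq:maslov}) with $m=l=0$ (atoms satisfying (i)/(ii)) and existential conjuncts with $m=k=0$ (atoms satisfying (i)/(iii), using that the block ends with $\exists$), and conclude via membership in $\bar{\text{DK}}$ and the results of \cite{Mas71,HS99}. Your added remark pinpointing that the ``purely universal or ending in $\exists$'' restriction is exactly what makes condition~(iii) apply is a faithful elaboration of the paper's argument, not a deviation from it.
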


We recall that the precise complexity of $\bar{\text{DK}}$ has not been established. It is also not known if 
$\bar{\text{DK}}$ has the finite  model property and if its finite satisfiability is decidable.
The same questions for $\bar{\text{K}}$ are also open.

\section{Finite model property}

The following theorem is the main results of this paper. Besides just proving the finite model property for \UFAm{}, it will also allow us to establish the exact complexity of its satisfiability problem.

\begin{theorem} \label{t:fmp}
\UFAm{} has the exponential model property.  
\end{theorem}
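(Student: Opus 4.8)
The plan is to follow the classical strategy for establishing the finite model property for \FOt{}-like logics: take an arbitrary (possibly infinite) model $\str{A}$ of a normal form sentence $\phi$ (by Lemma~\ref{l:nf2} it suffices to treat normal form sentences), extract a small amount of combinatorial data from it, and build a finite model of exponential size from that data. The central notion will be a refinement of the $1$-type, recording for each element enough information about the existential conjuncts $\phie_i$ it must help to realize. The main subtlety, compared to \FOt{}, is that an existential conjunct $\phie_i = \Qfr_1^i x_1 \ldots \exists x_{k_i} \psie_i$ with alternating quantifiers cannot be witnessed by a single element: its truth is a statement about a whole nested game on tuples. So the first step is to reformulate satisfaction of $\phie_i$ in game-theoretic terms: in $\str{A}$ there is a strategy that, given the adversary's choices for the universally quantified positions, produces the existential witnesses (in particular the final one at position $k_i$), so that the quantifier-free $\psie_i$ holds on the resulting tuple together with all its unary and full-arity atoms.

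Next I would isolate the "demands" generated by the universal conjuncts $\phiu_i = \forall x_1 \ldots x_{l_i} \psiu_i$. Because $\psiu_i$ is a uniform quantifier-free formula — a Boolean combination of the $1$-types of the $x_j$ and of atoms using \emph{all} of $x_1,\ldots,x_{l_i}$ — whether $\psiu_i$ holds on a tuple $\bar a$ depends only on the sequence of $1$-types $\tp^{\str A}(a_1),\ldots,\tp^{\str A}(a_{l_i})$ and on the "full-arity diagram" of $\bar a$, i.e. the quantifier-free type restricted to atoms mentioning every coordinate. The key structural observation I would aim to prove is a \emph{uniformity/locality lemma}: the only cross-element information that matters for any conjunct of $\phi$ is, for each tuple of elements, its full-arity diagram, and these can be chosen essentially freely subject to respecting the $1$-types. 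This lets us treat the finite model as: a set of elements, each carrying a $1$-type, plus a freely assignable full-arity diagram on every tuple.

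For the construction itself, I would let $D$ be the (at most exponentially many) $1$-types realized in $\str{A}$, and form the finite domain as a bounded number of copies of $D$ — enough copies so that the alternating existential games for every $\phie_i$ can be "unravelled" to bounded depth: since each $\phie_i$ has $k_i \le |\phi|$ quantifiers, a tree of witnesses of branching bounded by $|D|$ and depth $|\phi|$ would be exponential, which is why I would instead reuse elements cyclically, proving that one can fold the witness tree back into a set of copies of $D$ of size polynomial in the number of conjuncts times $|D|$, hence exponential in $|\phi|$. Concretely, for a fixed element $a$ of the new model with $1$-type $\alpha$, and for each existential conjunct $\phie_i$, I copy into the model the finite play-tree obtained from $\str{A}$'s winning strategy, identifying nodes with copies of elements of $D$, and I assign full-arity atoms along each branch exactly as in $\str{A}$; for tuples not arising along any such branch, and for every universal conjunct, I must assign full-arity atoms so that all the $\phiu_i$ stay true. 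This last point is where the real work lies.

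\textbf{Main obstacle.} The hard part is the consistency of the full-arity atom assignment: a single tuple $\bar b$ of new elements may be required to carry a particular full-arity diagram because it lies on the branch of some existential witness tree, while simultaneously every tuple of its $1$-types must satisfy all the $\psiu_i$. I expect to resolve this by a "default-plus-patch" argument in the style of the \FOt{} and \UF{} constructions: fix a canonical assignment of full-arity atoms depending only on the $1$-types of the coordinates (pulled back from some fixed tuple in $\str{A}$ realizing those $1$-types), which by the uniformity lemma automatically satisfies all universal conjuncts; then overwrite this default on the — relatively few — tuples lying on existential witness branches, using distinct copies of $D$ for distinct branches so that overwriting one branch never disturbs another or a universal constraint. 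Verifying that distinct branches can indeed be kept disjoint enough (so no tuple is overwritten in two incompatible ways), while keeping the number of copies exponential rather than iterated-exponential, is the crux; the alternation of quantifiers, which forces witness \emph{trees} rather than single witnesses, is exactly what makes this bookkeeping more delicate than in s\UF{}, and controlling the depth $k_i$ of these trees is what keeps the bound at a single exponential.
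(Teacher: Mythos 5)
Your overall plan coincides with the paper's: reduce to normal form, represent each existential conjunct $\phie_i$ by a strategy/witness tree extracted from the infinite model $\str{A}$, reduce the universal conjuncts to a compatibility condition on $1$-types and full-arity diagrams pulled back from $\str{A}$, build an exponential domain out of copies of the realized $1$-types, and assign full-arity atoms by ``branch data where forced, $1$-type-based defaults elsewhere.'' Up to that point the proposal is faithful to the paper's argument (its ``satisfaction forests'' and $\phiu$-compatibility are exactly your witness trees and uniformity lemma).

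However, there is a genuine gap precisely at what you yourself flag as the crux, and the plan as stated cannot be carried out naively. You propose to keep distinct witness branches disjoint by ``using distinct copies of $D$ for distinct branches,'' but the universal nodes of a witness tree must be labelled by \emph{every} element of the finite domain --- their labels are not yours to choose --- so entire branches can never be made disjoint, and exponentially many branches will unavoidably share almost all of their elements. What actually rescues the construction is a combination of three observations you do not supply. First, by uniformity a branch only pins down atoms mentioning \emph{all} elements of its label set $Set(\flat)$, so two branches conflict only if their label \emph{sets} coincide; disjointness is not needed, only $Set(\flat_1)\neq Set(\flat_2)$. Second, the one label on a branch that is freely choosable is the leaf, and it is choosable \emph{only because the last quantifier of every existential conjunct of $\UFAm$ is $\exists$} --- this is the defining restriction of $\UFAm$ versus $\UFA$, and your proposal never invokes it, which is a warning sign since the theorem is not known for full $\UFA$. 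Third, one needs a concrete injective scheme for placing leaves so that distinct branches get distinct label sets; the paper partitions the domain into $2K$ layers and uses a Hall's-marriage-theorem matching $ext_l:[2K]^{l}\to[2K]^{l+1}$ with $S\subseteq ext_l(S)$ to send each branch's leaf into a layer untouched by that branch's interior, with an extra coordinate of size $(K-1)^{K-1}$ to separate branches whose interiors use the same layers. Without some such device your ``default-plus-patch'' step can assign two incompatible pre-structures to the same tuple, so the proof is incomplete as it stands.
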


The rest of this section is devoted to a proof of the above theorem.
 By Lemma \ref{l:nf2} we may restrict attention to formulas of the form (\ref{eq:normal}).

\subsection{Satisfaction forests} \label{s:satfor}

In this subsection we introduce \emph{satisfaction forests}, which are auxiliary structures (partially) describing some finite models of normal form 
\UFAm{} sentences.  
We first explain how to extract a satisfaction forest from a given finite model $\str{B}$ of a normal form sentence $\varphi$. Then we
formally define satisfaction forests and relate their existence to the existence of finite models of normal form sentences.

\subsubsection{Extracting a satisfaction forest from a model} 
\label{s:extraction}
Let $\varphi$ be a normal form $\UFAm$ sentence and let $\str{B}$ be its finite model. Assume that $\varphi$ is as in (\ref{eq:normal}). The satisfaction forest will be
a collection of labelled trees, one tree for each existential conjunct of $\varphi$, showing how this conjunct is satisfied in $\str{B}$. The labelling function will be 
denoted $\cL$ and will assign elements from $B$ to tree nodes (with the exception of the root, which will be assigned the special empty label).  

Consider a single existential conjunct 
$\phie_i=\Qfr_1^i x_1 \Qfr_2^i x_2 \ldots \Qfr_{k_i{-}1}^i  x_{k_i{-}1} \exists  x_{k_i} \psie_i$.
Its satisfaction tree $\cT_i$ is built in the following process.

Start with a root labelled with the empty label. The root forms level $0$ of the tree. Level $j$, $0 < j \le k_i$ will correspond
to the quantifier $\Qfr^i_j$. Assume level $j-1$ has been constructed, for $0 < j  \le k_i$. For each of its nodes $n$: 
\begin{itemize} \itemsep0pt
\item If 
$\Qfr^i_j=\forall$ then for each element $b \in B$ add a child $n'$ of $n$ to $\cT_i$ and set $\cL(n'):=b$. Nodes added in this step are called \emph{universal nodes}.
\item If 
$\Qfr^i_j=\exists$ then let $n_1, \ldots, n_{j-1}=n$ be the sequence of non-root nodes on the branch of $n$, ordered from the child of the root towards $n$. Choose in $\str{B}$ an element $b$ such that 
$$\str{B} \models \Qfr_{j+1}^i x_{j+1} \ldots \Qfr_{k_i-1}^i x_{k_i-1}\exists x_{k_i} \psie_i (\cL(n_1), \ldots, \cL(n_{j-1}), b, x_{k+1}, \ldots, x_{k_{i}}).$$
It is clear that such an element exists. If $j<k_i$ then we call it an \emph{intermediate witness} for $\phie_i$, and if $j=k_i$ we call it the \emph{final witness} for $\phie_i$. Add a single child $n'$ of $n$ to $\cT_i$ and set $\cL(n')=b$. The added element is called an \emph{existential node}.
\end{itemize}
For a branch $\flat$ of the above-defined tree
we denote by $Set(\flat)$ the set of labels of the non-root elements of $\flat$, by $Set^-(\flat)$ the set of labels of non-root and non-leaf elements
of $\flat$ and by $Seq(\flat)$ the sequence
of the non-root elements of $\flat$, ordered from the child of the root towards the leaf.

We further overload the function $\cL$ by allowing it to define also labels for branches of the tree (by a \emph{branch} we mean here
a sequence of elements $n_1, \ldots, n_{k_i}$ such that $n_1$ is a child of the root, $n_{k_i}$ is a leaf, and each $n_{i+1}$ is a child of $n_i$).
We label each branch $\flat$ with the pre-substructure of $\str{B}$ over $Set(\flat)$.

To declare some properties of satisfaction forests we need the following notions.
A pre-structure $\str{H}^*$  is $\phiu_i$-\emph{compatible},
if for every sequence $a_1, \ldots, a_{l_i}$ of elements of $H$ such that  $\{a_1, \ldots, a_{l_i}\}=H$ 
we have $\str{H}^* \models \psiu_i(a_1, \ldots, a_{l_i})$.
A pre-structure is $\phiu$-\emph{compatible} if it is $\phiu_i$-compatible for every conjunct 
$\phiu_i$.
Further, a set of $1$-types $\{\alpha_1, \ldots, \alpha_k \}$ is $\phiu$-\emph{compatible} if for a set of  distinct
elements $H=\{a_1, \ldots, a_m \}$ and any assignment $f: \{a_1, \ldots, a_m\} \rightarrow \{\alpha_1, \ldots, \alpha_k \}$
one can build a $\phiu$-compatible pre-structure on $H$ in which, for every $i$, the $1$-type of  $a_i$ is $f(a_i)$.

\medskip
We now collect some properties of the tree $\cT_i$ for $\phie_i$ constructed as above.
\begin{enumerate} \itemsep0pt
\item[(T1)] for $1 \le j \le k_i$, and every node $n$ from level $j-1$:
\begin{enumerate} \itemsep0pt
	\item if $\Qfr_j^i=\forall$ then $n$ has precisely $| B |$ children, labelled by distinct elements of $B$ (recall that each of these children is called a universal node)
	\item if $\Qfr_j^i=\exists$ then $n$ has precisely one child (recall that this child is called an existential node)
	\end{enumerate}
	\item[(T2)] for every branch $\flat \in \cT_i$, assuming $Seq(\flat)=(a_1, \ldots, a_{k_i})$, we have $\cL(\flat) \models \psie_i(a_1, \ldots, a_{k_i})$
	\item[(T3)] for every pair of branches $\flat_1, \flat_2 \in \cT_i$, for every $a \in B$ such that $a \in Set(\flat_1)$ and $a \in Set(\flat_2)$ 
 the $1$-types of $a$ in $\cL(\flat_1)$ and in $\cL(\flat_2)$ are identical
	\item[(T4)] for every pair of branches $\flat_1, \flat_2 \in \cT_i$ such that $Set(\flat_1)=Set(\flat_2)$ we have that $\cL(\flat_1)  = \cL(\flat_2)$.
	\item[(T5)] 	for every branch $\flat \in \cT_i$, $\cL(\flat)$ is $\phiu$-compatible
	\end{enumerate}
	
Now we collect some properties of the whole sequence of trees  $\cT_1, \ldots, \cT_{\mse}$ constructed for $\varphi$ and $\str{B}$.

\begin{enumerate} \itemsep0pt
\item[(F1)] for every $i$, $\cT_i$ is a satisfaction tree over $B$ for $\phie_i$
\item[(F2)] for every pair of branches $\flat_1 \in \cT_i, \flat_2 \in \cT_j$, $i \not=j$, for every $a \in B$ 
such that $a \in Set(\flat_1)$ and $a \in Set(\flat_2)$ 
	the $1$-types of $a$ in $\cL(\flat_1)$ and in $\cL(\flat_2)$ are identical
	\item[(F3)] for every pair of branches $\flat_1 \in \cT_i, \flat_2 \in \cT_j$, $i \not=j$ such that $Set(\flat_1)=Set(\flat_2)$ we have that $\cL(\flat_1)  = \cL(\flat_2)$
	\item[(F4)] the set of all $1$-types appearing in the pre-structures defined as labels of  the branches of the trees in $\cF$ is $\phiu$-compatible.

\end{enumerate}

Properties (T3), (T4), (F2) and (F3) will be sometimes called the \emph{(forest) consistency conditions}. 

\begin{claim}\label{c:conds}
The sequence of trees $\cT_1, \ldots, \cT_{\mse}$ constructed as above for the structure $\str{B}$ and the sentence $\varphi$ satisfies conditions (T1)-(T5) and (F1)-(F4).
\end{claim}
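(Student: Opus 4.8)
The plan is to verify each of (T1)--(T5) and (F1)--(F4) directly against the construction of $\cT_1, \ldots, \cT_{\mse}$; almost all of it is bookkeeping. The one reusable tool is a \emph{transfer observation} coming from one-dimensionality: if $\chi(x_1, \ldots, x_k)$ is a quantifier-free \UFAm{} formula and $a_1, \ldots, a_k \in B$, then whether $\str{B} \models \chi(a_1, \ldots, a_k)$ depends only on the pre-substructure of $\str{B}$ over $\{a_1, \ldots, a_k\}$; in particular $\str{B} \models \chi(a_1, \ldots, a_k)$ holds iff $(\str{B} \restr \{a_1, \ldots, a_k\})^* \models \chi(a_1, \ldots, a_k)$, and the same equivalence holds for any pre-structure whose domain contains $\{a_1, \ldots, a_k\}$ that agrees with $\str{B}$ on every single-element fact about the $a_i$ and on every fact whose argument set is exactly $\{a_1, \ldots, a_k\}$. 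I would prove this by a trivial induction on $\chi$, the only point being that substituting $a_1, \ldots, a_k$ into an atom that uses \emph{all} of $x_1, \ldots, x_k$ gives a fact on exactly the argument set $\{a_1, \ldots, a_k\}$.

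The routine conditions are then disposed of quickly. (T1) is literally the description of how the levels of $\cT_i$ are built. (T3) and (F2) hold because every branch label is a pre-substructure of the \emph{single} model $\str{B}$, so the $1$-type of an element inside any label equals $\type{\str{B}}{a}$, regardless of the branch or tree. (T4) and (F3) hold because, by definition of $\cL$ on branches, two branches with the same label-set $Set(\flat)$ carry the very same pre-substructure $(\str{B}\restr Set(\flat))^*$. Finally (F1) reduces to the (already verified) properties (T1)--(T5) together with the fact that $\cT_i$ is by construction a tree over the domain $B$ built for $\phie_i$.

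For (T2) I would maintain, as $\cT_i$ is built, the invariant that every node $n$ at level $j$ with branch prefix $n_1, \ldots, n_j$ satisfies $\str{B} \models \Qfr_{j+1}^i x_{j+1} \ldots \exists x_{k_i}\, \psie_i(\cL(n_1), \ldots, \cL(n_j), x_{j+1}, \ldots, x_{k_i})$. It holds at the root since $\str{B} \models \phie_i$; it is preserved through a universal level because $\forall x_{j+1}$ is stripped while every $b \in B$ is used as a child label; and it is preserved through an existential level because the construction chooses the unique child label to be exactly such a witness. At level $k_i$ the invariant reads $\str{B} \models \psie_i(a_1, \ldots, a_{k_i})$ for $Seq(\flat) = (a_1, \ldots, a_{k_i})$, and the transfer observation applied with the pre-structure $\cL(\flat) = (\str{B}\restr Set(\flat))^*$ yields $\cL(\flat) \models \psie_i(a_1, \ldots, a_{k_i})$. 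Condition (T5) is analogous: for a universal conjunct $\phiu_i = \forall x_1 \ldots x_{l_i}\, \psiu_i$ and a sequence $a_1, \ldots, a_{l_i}$ enumerating $Set(\flat)$, we have $\str{B} \models \psiu_i(a_1, \ldots, a_{l_i})$ since $\str{B} \models \phiu_i$, and the transfer observation moves this to $\cL(\flat)$.

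The only condition with genuine content is (F4), and I expect it to require the most (though still modest) care. Every $1$-type occurring in a branch label is realized in $\str{B}$, so let $T$ be the set of all such types, fix a finite set $H = \{a_1, \ldots, a_m\}$ of distinct elements and an assignment $f : H \to T$, and pick $b_i \in B$ with $\type{\str{B}}{b_i} = f(a_i)$, writing $g(a_i) = b_i$ (the map $g$ need not be injective). I would define a pre-structure $\str{H}^*$ on $H$ by: a fact whose argument set is exactly $H$ holds in $\str{H}^*$ iff its componentwise $g$-image holds in $\str{B}$, and a single-element fact about $a_i$ holds iff it belongs to $f(a_i)$; these two clauses agree when $|H| = 1$, so $\str{H}^*$ is well defined, and it realizes $f(a_i)$ at each $a_i$. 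Then for any universal conjunct $\phiu_j$ and any sequence $c_1, \ldots, c_{l_j}$ from $H$ that enumerates $H$, the tuple $g(c_1), \ldots, g(c_{l_j})$ enumerates $\{b_1, \ldots, b_m\}$, so $\str{B} \models \psiu_j(g(c_1), \ldots, g(c_{l_j}))$; comparing $\psiu_j$ atom by atom --- a full atom is decided in $\str{H}^*$ by copying its $g$-image, a single-variable atom by the relevant $1$-type --- shows $\str{H}^* \models \psiu_j(c_1, \ldots, c_{l_j})$. Hence $\str{H}^*$ is $\phiu$-compatible and so $T$ is $\phiu$-compatible. The points to be careful about --- well-definedness of $\str{H}^*$ and the atom-by-atom comparison --- both hinge, exactly as in the transfer observation, on one-dimensionality; I anticipate no further difficulty.
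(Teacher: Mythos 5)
Your proposal is correct and follows essentially the same route as the paper: the paper dismisses (T1)--(T5) and (F1)--(F3) as routine and only details (F4), and your (F4) argument (choose representatives $g(a_i)$ in $\str{B}$ realizing the prescribed $1$-types, copy the full-tuple facts through $g$, and compare $\psiu_j$ atom by atom using one-dimensionality) is exactly the paper's. Your explicit ``transfer observation'' and the level-by-level invariant for (T2) merely spell out what the paper leaves implicit.
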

\begin{proof} (Sketch)
It is not difficult to see that each of the $\cT_i$ satisfies (T1)-(T5)  and that the whole sequence satisfies (F1)-(F3). 
The only non-obvious point is (F4). Let us prove that it is true. Let $\alpha_1, \ldots, \alpha_k$ be the list of all
$1$-types appearing in the pre-structures defined in the whole forest. Let $H=\{a_1, \ldots, a_m \}$ be a set of fresh distinct elements
and $f: \{a_1, \ldots, a_m \} \rightarrow \{\alpha_1, \ldots, \alpha_k \}$ an assignment of $1$-types to these elements. 
We need to construct a
$\phiu$-compatible pre-structure on $H$ in which, for every $i$, the $1$-type of  $a_i$ is $f(a_i)$. 
For each $i$ choose an element $g(a_i) \in B$ such that $\tp^{\str{B}}(g(a_i)) = f(a_i)$; $g$ need not be injective. Let us 
define the pre-structure $\str{H}^*$ on $H$ by setting the $1$-type of $a_i$ to be $f(a_i)$, and
for every relation symbol $R$, and every sequence $c_1, \ldots, c_l$ of elements of $H$ such that $l$ is the arity of $R$ and
$\{c_1, \ldots, c_l\}=H$, setting the truth-value of the atom $R(c_1, \ldots, c_l)$ to be equal to the truth-value 
of  $R(g(c_1), \ldots, g(c_l))$ in $\str{B}$. 
We claim that so defined $\str{H}^*$ is $\phiu$-compatible. To see this take any conjunct $\phiu_i$
and any sequence of elements $c_1, \ldots, c_{l_i}$ such that $\{c_1, \ldots, c_{l_i}\}=H$ and
assume to the contrary that $\str{H}^* \not\models \psiu_i(c_1, \ldots, c_{l_i})$.
But then $\str{B} \not\models \psiu_i(g(c_1), \ldots, g(c_{l_i}))$, as the truth-values of the atoms appearing in $\psiu_i$
in the two considered structures appropriately coincide by our definition of $\str{H}^*$. Contradiction. 
\end{proof}

\subsubsection{Satisfaction forests and the existence of finite models} Let $\varphi$ be a normal form \UFAm{} sentence (we do not assume that a model of $\varphi$ is known). Formally, a \emph{satisfaction forest} for
$\varphi$ \emph{over a domain} $B$ is a sequence of
	trees $\cT_1, \ldots, \cT_{\mse}$ together with a labelling function $\cL$, assigning elements of $B$ to the nodes of the $\cT_i$ (with the exception of their roots to which the special empty label is assigned) and pre-structures to their branches, such that each of the trees $\cT_i$ satisfies conditions
	(T1)-(T5) and the whole sequence satisfies conditions (F1)-(F4).

\begin{lemma} \label{l:modelexists}
A normal form \UFAm{} sentence $\varphi$ has a finite model over a domain $B$ iff it has a satisfaction forest over $B$.
\end{lemma}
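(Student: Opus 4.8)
\medskip
\noindent\emph{Proof plan.}\ The plan is to prove the two implications separately. The forward direction is essentially already available: given a finite model $\str B$ of $\varphi$, apply the extraction procedure of Section~\ref{s:extraction} to produce the trees $\cT_1,\ldots,\cT_{\mse}$ together with the labelling $\cL$, and invoke Claim~\ref{c:conds}, which guarantees that conditions (T1)--(T5) and (F1)--(F4) all hold; by definition, this is a satisfaction forest for $\varphi$ over $B$.

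For the converse, suppose $(\cT_1,\ldots,\cT_{\mse},\cL)$ is a satisfaction forest for $\varphi$ over $B$, and build a $\sigma$-structure $\str B$ with domain $B$ as follows. Call a set $S\subseteq B$ \emph{occurring} if $S=Set(\flat)$ for some branch $\flat$ of some tree. By (T4) and (F3) all branches with label set $S$ carry one and the same pre-structure, which we denote $\cL_S$, and by (T3) and (F2) the $1$-types attributed to a fixed element by the various $\cL_S$ that contain it agree; hence every element lying in some occurring set receives a well-defined $1$-type, and each remaining element (if any) is given an arbitrary $1$-type from the set $\Pi$ of all $1$-types occurring in the branch labels (which is non-empty whenever $\mse\ge 1$). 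Then, for a tuple $\bar a$ with underlying set $S$: if $S$ is occurring, set the value of every atom on $\bar a$ in accordance with $\cL_S$; otherwise invoke the $\phiu$-compatibility of $\Pi$ supplied by (F4) to fix a $\phiu$-compatible pre-structure on $S$ realising the $1$-types already chosen on $S$, and read the values off that pre-structure. Since distinct sets $S$ govern disjoint families of atoms and the $1$-types are pinned down in advance, these choices never conflict.

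The remaining task is to check $\str B\models\varphi$. The key auxiliary observation is that the truth of a quantifier-free \UFAm{} formula $\psi(x_1,\ldots,x_k)$ under an assignment $x_i\mapsto a_i$ depends only on the induced pre-substructure $\str B^*\restr\{a_1,\ldots,a_k\}$: every atom of $\psi$ either mentions all of $x_1,\ldots,x_k$, so instantiates to an atom whose arguments exhaust $\{a_1,\ldots,a_k\}$, or mentions at most one variable, so instantiates to an atom over a singleton --- and both kinds are recorded by that pre-substructure. Granting this, the universal conjuncts follow immediately: for $\phiu_i=\forall x_1\ldots x_{l_i}\,\psiu_i$ and any $a_1,\ldots,a_{l_i}$, with $S=\{a_1,\ldots,a_{l_i}\}$ the pre-substructure $\str B^*\restr S$ equals $\cL_S$, which is $\phiu$-compatible by (T5), when $S$ is occurring, and otherwise equals the $\phiu$-compatible pre-structure chosen via (F4); in either case $\str B\models\psiu_i(a_1,\ldots,a_{l_i})$. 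For an existential conjunct $\phie_i$ one uses the tree $\cT_i$ and shows, by induction from the leaves up towards the root, that for every node $n$ at level $j$ (with $0\le j\le k_i$), writing $b_1,\ldots,b_j$ for the labels along the branch from the root's child down to $n$,
\[
\str B\models\Qfr^i_{j+1}x_{j+1}\ldots\exists x_{k_i}\;\psie_i(b_1,\ldots,b_j,x_{j+1},\ldots,x_{k_i}).
\]
The base case $j=k_i$ is condition (T2) together with $\str B^*\restr Set(\flat)=\cL(\flat)$ and the auxiliary observation; at an internal node, a unique existential successor supplies the witness for an $\exists$-quantifier, while a node whose successor level is universal has, by (T1), exactly one child per element of $B$, which discharges a $\forall$-quantifier. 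Instantiating this at the root gives $\str B\models\phie_i$ for every $i$, whence $\str B\models\varphi$.

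The step I expect to be the main obstacle is the gluing in the backward direction: one has to ensure both that the pre-structures decorating the branches --- which may come from different trees and overlap intricately --- amalgamate into a single consistent structure, which is exactly what the consistency conditions (T3), (T4), (F2), (F3) are designed to deliver, and, more delicately, that the portion of $\str B$ left unconstrained by the branches can still be completed so that \emph{every} induced pre-substructure, not just those appearing in the forest, is $\phiu$-compatible --- this last point being precisely the role of condition (F4), used in tandem with (T5). The bottom-up induction over the satisfaction trees that recovers the existential conjuncts is, by comparison, routine.
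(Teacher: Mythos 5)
Your proposal is correct and follows essentially the same route as the paper: the forward direction via the extraction procedure and Claim~\ref{c:conds}, and the backward direction via the same three-stage gluing (fix $1$-types using (T3)/(F2), install the branch pre-structures using (T4)/(F3), complete the remaining tuples using (F4)) followed by verification of the universal conjuncts through $\phiu$-compatibility and of the existential conjuncts by walking the satisfaction trees. Your write-up is in fact slightly more explicit than the paper's on two minor points --- the observation that quantifier-free \UFAm{} formulas are determined by the induced pre-substructure, and the bottom-up induction over tree levels --- but the argument is the same.
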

\begin{proof}
Left-to-right implication is justified by the extraction of a satisfaction forest from a given finite model of $\varphi$
described in Section \ref{s:extraction}, and in particular by Claim \ref{c:conds}.

In the opposite
direction assume that a satisfaction forest over a finite domain $B$ for $\varphi$ is given. We construct a model $\str{B}$ of $\varphi$
over the domain $B$. The construction is natural:

\medskip\noindent
\emph{Step 1: $1$-types.} The $1$-type of an element $b \in B$ is defined as the $1$-type of $b$ in the structure
$\cL(\flat)$ for an arbitrarily chosen branch $\flat$, in an arbitrarily chosen tree $\cT_i$, for which $b \in Set(\flat)$.

\medskip\noindent
\emph{Step 2: Witnesses.}
For every tree $\mathcal{T}_i$ and  its every branch $\flat$ of $\mathcal{T}_i$ define the pre-structure on $Set(\flat)$ 
in accordance with $\cL(\flat)$. 

\medskip\noindent
\emph{Step 3: Completion.}
For any set of distinct elements $\{b_1, \ldots, b_k\}$ whose pre-structure is not yet defined, choose any
$\phiu$-compatible pre-structure which retains the already defined $1$-types of the $b_i$.

\medskip
Properties (T3), (F2), (T4) and (F3) guarantee that Step 1 and Step 2 can be performed without conflicts
and the existence of an appropriate pre-structure in Step 3 is guaranteed by (F4).

\medskip
It remains to see that $\str{B} \models \varphi$. 
Consider any existential conjunct of $\phi$, that is a conjunct $\phie_i=\Qfr_1^i x_1 \Qfr_2^i x_2 \ldots \Qfr_{k_i{-}1}^i  x_{k_i{-}1} \exists  x_{k_i} \psie_i$.
The satisfaction tree $\cT_i$ witnesses that $\phie_i$ indeed holds: it describes all possible substitutions for universally quantified
variables, and shows how intermediate and final witnesses for existential quantifiers can be chosen.
Consider now any universal conjunct $\phiu_i=\forall x_1 \ldots x_{l_i} \psiu_i$ and
let $b_1, \ldots, b_{l_i}$ be any sequence of elements of $B$ (possibly with repetitions).
Let $H=\{b_1, \ldots, b_{l_i}\}$. The pre-structure on $H$ has been defined either in Step 2 or in Step 3.
In both cases we know that it is $\phiu$-compatible, in particular it is $\phiu_i$-compatible,
so $\str{H}^* \models \psi_i^\forall(b_1, \ldots, b_{l_i})$.  
\end{proof}

\subsection{From a model to a satisfaction forest over a small domain}

We are ready to present the main construction of this paper in which we show that every satisfiable formula has a satisfaction forest over a small domain. 

Let $\str{A}$ be a (possibly infinite) model of a normal form sentence $\phi$ of the shape as in (\ref{eq:normal}). We show how to construct a satisfaction forest over a domain of size bounded exponentially
in $| \phi |$. By Lemma \ref{l:modelexists} this will guarantee that $\phi$ has a finite model over such a bounded domain.

\subsubsection{Domain}
Let $L$ be the number of $1$-types (over the signature of $\phi$) realized in $\str{A}$, and let these types be enumerated as $\alpha_1, \ldots, \alpha_L$. Let $K=\max \{k_i: 1 \le i \le \mse \}$.  
We define the domain $B$ to be $\{1, \ldots, 2K \} \times \{1, \ldots, \mse \} \times \{1, \ldots, (K-1)^{K-1} \} \times \{1, \ldots, L\}$.   Note that $K$ and $\mse$ are bounded linearly  and $L$ is bounded exponentially in $| \phi | $, and hence $ | B |$ is indeed bounded exponentially in $| \phi |$.

For convenience let us split $B$ into the sets $B_i= \{(i,*,*,*) \}$ (here and in the sequel $*$ will be sometimes used as a wildcard in the tuples denoting elements of the domain). We will sometimes call $B_i$ the $i$-th \emph{layer} of $B$. 

\subsubsection{Some simple combinatorics: Extension functions} During the construction of the satisfaction forest we will design a special strategy for assigning 
labels to the leaves. To this end we introduce  an auxiliary combinatorial tool, which we will call
\emph{extension functions}.

Let us recall a well known Hall's marriage theorem.
A \emph{matching} in a bipartite graph $(G_1, G_2, E$) is a partial injective function $f:G_1 \rightarrow G_2$ such that if $f(a)=b$ then
$(a,b) \in E$. 
\begin{theorem}[Hall]
Let $(G_1, G_2, E)$ be a bipartite graph. There exists a matching covering $G_1$ iff for any set $W \subseteq G_1$ the number of vertices of
$G_2$ incident to the edges emitted from $W$ is greater or equal to $|W|$.
\end{theorem}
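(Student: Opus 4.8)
The plan is to prove the two directions separately. For $W \subseteq G_1$ write $N(W)$ for the set of vertices of $G_2$ incident to the edges emitted from $W$, so the stated condition reads $|N(W)| \ge |W|$ for all $W \subseteq G_1$; call this \emph{Hall's condition}. The necessity direction is immediate: if $f$ is a matching covering $G_1$, then for any $W \subseteq G_1$ the values $f(a)$, $a \in W$, are pairwise distinct (as $f$ is injective) and each lies in $N(W)$ (as $(a,f(a)) \in E$), so $|N(W)| \ge |W|$. The substance is the sufficiency direction, which I would establish by induction on $n = |G_1|$ under the assumption of Hall's condition. The base case $n \le 1$ is trivial: if $n=1$ then the unique $a \in G_1$ satisfies $|N(\{a\})| \ge 1$, so $a$ has a neighbour to be matched to. For the inductive step I split into two cases according to how tightly Hall's condition is met.

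In the first (\emph{slack}) case, every nonempty proper subset $W \subsetneq G_1$ satisfies the strict inequality $|N(W)| \ge |W|+1$. Then I pick an arbitrary $a \in G_1$, match it to an arbitrary neighbour $b \in N(\{a\})$, and delete $a$ from $G_1$ and $b$ from $G_2$. In the resulting graph any $W' \subseteq G_1 \setminus \{a\}$ still satisfies Hall's condition, since deleting $b$ shrinks $N(W')$ by at most one, while the slack guarantees $|N(W')| \ge |W'|+1$ in the original graph. The induction hypothesis then yields a matching covering $G_1 \setminus \{a\}$, which together with the edge $(a,b)$ covers $G_1$.

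In the second (\emph{tight}) case, some nonempty proper subset $W_0 \subsetneq G_1$ satisfies $|N(W_0)| = |W_0|$. Here I decompose the graph into two pieces. The first piece, on left side $W_0$ and right side $N(W_0)$, inherits Hall's condition (subsets of $W_0$ are subsets of $G_1$) and has $|W_0| < n$ left vertices, so induction provides a matching $f_1$ covering $W_0$. The second piece, on left side $G_1 \setminus W_0$ and right side $G_2 \setminus N(W_0)$, also satisfies Hall's condition: for $W' \subseteq G_1 \setminus W_0$ one has $N(W' \cup W_0) = N(W') \cup N(W_0)$, so Hall's condition applied to $W' \cup W_0$ together with the tightness $|N(W_0)| = |W_0|$ forces the neighbours of $W'$ lying outside $N(W_0)$ to number at least $|W'|$. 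Induction gives a matching $f_2$ covering $G_1 \setminus W_0$ using only vertices of $G_2 \setminus N(W_0)$, and since $f_1$ and $f_2$ use disjoint sets of right-hand vertices, their union covers all of $G_1$. I expect the tight case to be the main obstacle: the decomposition must be chosen so that neither piece reuses a right-hand vertex, and the verification that the outer piece still satisfies Hall's condition is precisely where the equality $|N(W_0)| = |W_0|$ is essential—without it the removed right vertices could destroy the condition on $G_1 \setminus W_0$. Everything else is routine bookkeeping.
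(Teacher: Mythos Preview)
Your proof is correct and is one of the standard textbook arguments for Hall's marriage theorem: the induction on $|G_1|$ with the slack/tight dichotomy is clean and the verification that the outer piece in the tight case retains Hall's condition is handled properly.

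However, the paper does not prove this statement at all. Hall's theorem is quoted as a classical result and used as a black box in the proof of the subsequent lemma (the existence of the extension functions $ext_l$). So there is nothing to compare against: your proposal supplies a proof where the paper simply appeals to the literature.
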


For a natural number $n$, let $[n]$ denote the set $\{1, \ldots, n \}$ and for $1 \le l \le n$ let $[n]^{l}$ denote the set of all subsets of $[n]$ of cardinality $l$.

\begin{lemma}
For every $0 <  l  <  K$ there exists a $1{-}1$ function $ext_l: [2K]^{l} \rightarrow [2K]^{l+1}$ such that for any $S \in [2K]^{l}$ we have that
	$S \subseteq ext_l(S)$.
\end{lemma}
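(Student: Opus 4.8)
The plan is to obtain $ext_l$ as a matching in a suitable bipartite graph and to invoke Hall's theorem (stated just above). Define the bipartite \emph{containment graph} $(G_1, G_2, E)$ by $G_1 = [2K]^{l}$, $G_2 = [2K]^{l+1}$, and $(S,T) \in E$ iff $S \subseteq T$. By the very definition of $E$, a matching $f \colon G_1 \to G_2$ covering $G_1$ is precisely an injective function such that $S \subseteq f(S)$ for all $S \in [2K]^{l}$; so it suffices to check that Hall's condition holds for this graph.

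So fix $W \subseteq G_1$ and let $N(W) \subseteq G_2$ be the set of $(l+1)$-element subsets of $[2K]$ containing at least one member of $W$; we must show $|N(W)| \ge |W|$. I would do this by double counting the edges of the containment graph leaving $W$. On the one hand, every $S \in [2K]^{l}$ is contained in exactly $2K - l$ sets of $[2K]^{l+1}$ (one for each element of $[2K] \setminus S$ that may be added), so the number of such edges is exactly $(2K-l)\,|W|$. On the other hand, every $T \in [2K]^{l+1}$ contains exactly $l+1$ sets of $[2K]^{l}$, so each vertex of $N(W)$ is incident to at most $l+1$ of these edges; hence $(2K-l)\,|W| \le (l+1)\,|N(W)|$.

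It then remains to note that $2K - l \ge l+1$: since $l < K$ and both are integers we have $l \le K-1$, hence $2l+1 \le 2K-1 < 2K$, i.e.\ $2K - l \ge l+1$. Combining the two inequalities, $|N(W)| \ge \tfrac{2K-l}{l+1}\,|W| \ge |W|$, which is exactly Hall's condition. Hall's theorem then yields a matching covering $G_1$, and this matching is the desired function $ext_l$.

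The only points requiring any care are the two local counts used in the double counting (each $l$-subset of $[2K]$ has $2K-l$ supersets of size $l+1$, and each $(l+1)$-subset has $l+1$ subsets of size $l$) together with the observation that the hypothesis $l<K$ is exactly what forces the expansion factor $(2K-l)/(l+1)$ to be at least $1$; in fact it is strictly larger than $1$, which is consistent with $[2K]^{l}$ being strictly smaller than $[2K]^{l+1}$ for $l<K$. Beyond this, the argument is routine, so I do not expect a genuine obstacle here.
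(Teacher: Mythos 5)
Your proof is correct and follows essentially the same route as the paper: Hall's theorem applied to the containment bipartite graph between $[2K]^{l}$ and $[2K]^{l+1}$, using the degree counts $2K-l$ and $l+1$ and the hypothesis $l<K$ to verify Hall's condition (the paper phrases the final inequality as a contradiction argument, while you argue directly via the expansion factor, but the content is identical).
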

\begin{proof}
Consider the bipartite graph $ ([2K]^{l}, [2K]^{l+1}, E)$ such that $(S, S') \in E$ iff $S \subseteq S'$. To show that a desired  $ext_l$ exists it
suffices to show the existence of a matching covering entirely the set $[2K]^{l}$. To this end we apply Hall's marriage theorem.
In our graph every 
node from $[2K]^{l}$ has degree $2K-l$ (given an $l$-element subset of $[2K]$ it can be expanded to an $l+1$ subset just by adding to it 
precisely one of the remaining $2K - l $ elements) and every node from $[2K]^{l+1}$ has degree $l+1$ (to obtain an $l$-element subset of a $l+1$-subset
one just removes one of the elements of the latter). Take a subset $W$ of $[2K]^{l}$.
The nodes of this subset are incident to $|W| \cdot (2K-l)$ edges in total. Let us see that the number of nodes in $[2K]^{l+1}$ incident to a node from $W$ is greater than or equal
to $| W |$. Indeed, assume to the contrary that it is not. Then at most $| W | - 1$ nodes absorb $|W| \cdot (2K-l)$ edges emitted by $W$,
but this means that  $|W| \cdot (2K-l) \le (|W|-1) \cdot (l+1)$. Rearranging this inequality we get that 
$|W|(2K-2l-1)  +l +1 \le 0$. But using the assumption that $0 < l < K$ we have that $(2K-2l-1)>0$ and hence
the whole left-hand side of the last inequality must be greater than $0$. Contradiction. Thus
our graph satisfies the Hall's theorem assumptions which guarantee the existence of a matching from $[2K]^{l}$ to $[2K]^{l+1}$, covering entirely $[2K]^{l}$.
This matching can be taken as $ext_l$. 
\end{proof}

Choose an extension function $ext_l$ for every $l$ and let $ext = \bigcup_{l=1}^{K-1} ext_l$, that is, $ext$ is a function which takes a non-empty subset of $[2K]$ of size at most $K-1$ and returns a superset 
containing precisely one new element. Obviously $ext$ remains an injective function.

\subsubsection{Construction of a satisfaction forest} We now describe how to construct a satisfaction forest $\cT_1, \ldots, \cT_{\mse}$ for $\phi$ over the domain $B$.
 It should be helpful to announce how we are going to take care of the  consistency conditions for the whole forest:
\begin{itemize} \itemsep0pt
\item Conditions (F2) and (T3): 
With every element  $a=(*,*,*,l) \in B$  we associate the $1$-type  $\alpha_l$. Whenever $a$ will be used as a label of a node in a satisfaction tree then
its $1$-type in the pre-structure defined for any branch containing a node labelled with $a$ will be set to $\alpha_l$.
\item Conditions (F3) and (T4): for a a pair of distinct branches $\flat_1$, $\flat_2$ (either belonging to the same tree or to two different trees) we will simply have $Set(\flat_1) \not= Set(\flat_2)$. This condition will be ensured by an appropriate use of the extension function. Here it is important that the last quantifier in every $\phie_i$-conjunct is existential, and hence the last node of every branch in $\mathcal{T}_i$  is also existential, so we can freely choose its label from $B$.

\end{itemize}

Let us explain how to construct a single $\cT_i$, a $\phie_i$-satisfaction tree over $B$. 
The general shape of $\cT_i$ is determined by $\phie_i$ and $B$: we know how many nodes we need, we know which of them
are existential, and which are universal, we know the labels of the universal nodes. It remains to assign labels to existential nodes (elements of $B$) and to branches (pre-structures on the set of elements formed from the labels of the nodes on a branch).

We define an auxiliary function $pat$ which for every node of $\cT_i$ returns \emph{a pattern element} from $\str{A}$. We will choose $pat(n)$, so that
its $1$-type is equal to type of $\cL(n)$. We remark, that if two nodes from different branches have the same label then they do not need to have the same pattern element.

Consider a node $n_k$ and assume that all its non-root ancestors $n_1, \dots, n_{k-1}$ have the function $pat$ and their labels already defined.
We proceed as follows

\begin{itemize} \itemsep0pt
\item If $n_k$ is universal then its label $\cL(n_k)$ is known
\begin{itemize} \itemsep0pt
	\item 
If $\cL(n_k)=\cL(n_j)$ for some $j<k$ then we set $pat(n_k)=pat(n_j)$. 
\item If the label $\cL(n_k)$ is not used by the ancestors of $n$ then choose as $pat(n)$ an arbitrary element of $\str{A}$ of the $1$-type assigned to
$\cL(n_k)$.
(In particular, we may use an element which was used by one of the ancestors of $n$)
\end{itemize}
\item If $n_k$ is existential then we need to define both $\cL(n)$ and $pat(n)$. By our construction we have  that 
\begin{eqnarray*}
\str{A} \models \exists x_k \Qfr_{k+1}^i x_{k+1} \ldots \Qfr_{k_i-1}^i x_{k_i-1}\exists x_{k_i} \psie_i (pat(n_1), \ldots, pat(n_{k-1}),x_k, x_{k+1}, \ldots, x_{k_i}).
\end{eqnarray*}

We choose an element $w \in A$ witnessing the previous formula, i.e., an element such that 
$$\str{A} \models \Qfr_{k+1}^i x_{k+1} \ldots \Qfr_{k_i-1}^i x_{k_i-1}\exists x_{k_i} \psie_i (pat(n_1), \ldots, pat(n_{k-1}), w, x_{k+1}, \ldots, x_{k_{i}})$$
 and set $pat(n_k)=w$. To define the label of $n_k$ we consider two cases:

\begin{itemize} \itemsep0pt
\item
If $n_k$ is not a leaf then:
\begin{itemize} \itemsep0pt
\item if $pat(n_j)=w$ for some $j<k$ then set $\cL(n_k)=\cL(n_j)$
\item otherwise we choose as $\cL(n_k)$ an arbitrary element of $B$ which has assigned the $1$-type $\tp^{\str{A}}(w)$, not used by the ancestors of $n_k$ (there are many copies
of each $1$-type in $B$ so it is always possible). 
\end{itemize}

\item
If $n_k$ is a leaf then let $\flat$ be the branch of $n_k$ and let  $S=\{j: n_l \in B_j \mbox{ for some } l<k \}$. Of course, $|S| <k \le K$ so $ext(S)$ is defined. Let $s$ be the unique member of $ext(S) \setminus S$.
We take as $\cL(n_k)$ an element $(s, i, t, l) \in B_s$ where $l$ is such that $\alpha_l=\tp^{\str{A}}(w)$, and where $t$ is chosen so that none of the  branches $\flat'$ 
of the current tree for
which the labels have been already defined 
such that $Set^-(\flat')=Set^-(\flat)$ used $(s, i, t, l)$ as the label of its leaf. 
We indeed have  enough elements for this, since obviously $|Set^-(\flat)| \le K-1$ and
thus there are at most $(K-1)^{K-1}$ different branches whose nodes from the first $K-1$ levels are
labelled by elements of $|Set^-(\flat)|$ (recall that there are $(K-1)^{K-1}$ possible choices for $t$).

\end{itemize}

\end{itemize}

Take now any branch $\flat$ of $\cT_i$. It remains to define the pre-structure $\cL(\flat)$. 
For any relational symbol $R$ of arity $m$ and any sequence $a_{i_1}, \ldots, a_{i_m}$ of elements of $Set(\flat)$ containing
all the elements of $Set(\flat)$ we set $R(a_{i_1}, \ldots, a_{i_m})$ to be true iff
$R(pat(a_{i_1}), \ldots, pat(a_{i_m}))$ is true in $\str{A}$. For every $a_j$ its $1$-type is set to be equal to the $1$-type of $pat(a_j)$.
 This completes the definition of the pre-structure on $Set(\flat)$. 
Note that this ensures that this pre-structure satisfies $\psi_i^{\exists}(Seq(\flat))$.

\subsection{Correctness}
Let us now see that the defined satisfaction forest indeed satisfies all the required conditions. 

\begin{itemize} \itemsep0pt
\item Conditions (T1), (T2) and (T3) should be clear.
\item For (T4) we show that  there is no pair of branches $\flat$, $\flat'$ in a tree $\cT_i$ with $Set(\flat)=Set(\flat')$. 
Indeed, we have chosen as labels of the leaves of $\flat$ and $\flat'$ two different elements $b=(s, i, x, *)$ and $b'=(s,i,y,*)$ of a layer $B_s$ which
is not inhabited by the elements of $Set^-(\flat)$ or $Set^-(\flat')$ (due to the use of the function $ext$). So $b \in Set(\flat)$ but $b \not\in Set(\flat')$
and thus $Set(\flat) \not=Set(\flat')$.
\item To show that (T5) holds assume to the contrary that for some branch $\flat$, $\str{H}^*=\cL(\flat)$ is not $\phiu$-compatible; take $i$ for
which it is not $\phiu_i$-compatible. So, for some sequence $a_1, \ldots, a_{l_1}$ such that $\{a_1, \ldots, a_{l_1}\}=Set(\flat)=H$ we have
$\str{H}^* \not\models \psiu_i ( a_1, \ldots, a_{l_i})$. But then the definition of the pre-structure in $\cL(\flat)$ implies that $\str{A} \not\models \psiu_i(pat(a_1), \ldots, pat(a_{l_i}))$, that is $\str{A}$ violates $\phiu_i$. Contradiction.
   
\item Conditions (F1), (F2) should be clear.
\item For (F3) the argument is similar to the argument for (T4): We show that  there is is pair of branches $\flat_1$, $\flat_2$ in a tree $\cT_i$, and resp., $\cT_j$, $i\not=j$, with $Set(\flat_1)=Set(\flat_2)$. Again, this follows from the fact that we have chosen as labels of the leaves of $\flat_1$ and $\flat_2$ two different elements $b$ and $b'$ of a layer $B_s$ which
is not inhabited by the elements of $Set^-(\flat_1)$ or $Set^-(\flat_2)$. This time the elements $b_1$ and $b_2$ are different from each other since
$b_1=(s, i, *, *)$ and $b_2=(s, j, *,*)$. 
\item For (F4) we reason precisely as in the reasoning for (F4) in the proof of  the Claim in Section \ref{s:satfor} (we just replace the structure $\str{B}$
from this proof with the currently considered structure $\str{A}$). 
\end{itemize}

An immediate consequence of Thm.~\ref{t:fmp} is: 
\begin{theorem}
The satisfiability problem for \UFAm{} is \NExpTime-complete.
\end{theorem}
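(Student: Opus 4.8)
The plan is to establish the two matching bounds and combine them.

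For membership in \NExpTime{}, I would cash in Theorem~\ref{t:fmp} directly. Since that theorem gives a fixed exponential function $f$ such that every satisfiable \UFAm{} sentence $\phi$ has a model with domain of size at most $f(|\phi|)$, satisfiability is decided by the following nondeterministic procedure: guess a structure $\str{B}$ over the domain $\{1,\ldots,f(|\phi|)\}$ --- that is, guess the truth value of $R(\bar b)$ for every relation symbol $R$ occurring in $\phi$ and every tuple $\bar b$ over $B$ of the matching arity --- and then check deterministically whether $\str{B}\models\phi$. The guessed object has size exponential in $|\phi|$, because the number and arities of the relation symbols of $\phi$ are polynomially bounded while $|B|$ is exponentially bounded, so there are only exponentially many relevant facts. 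The verification runs in deterministic exponential time. (Alternatively one could guess a satisfaction forest over an exponential domain and invoke Lemma~\ref{l:modelexists}, but guessing the model directly is cleaner.)

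The single mildly delicate spot --- and the only thing I would flag as needing an explicit sentence --- is that \UFAm{} permits quantifier blocks of unbounded length, so one should check that model-checking stays within exponential time. A subsentence $\Qfr_1 x_1\cdots\Qfr_m x_m\,\psi$ is evaluated in $\str{B}$ by ranging over all $|B|^m$ assignments to $x_1,\ldots,x_m$, but since $m\le|\phi|$ and $|B|\le f(|\phi|)$ we have $|B|^m = 2^{m\log|B|}$, which is still exponential in $|\phi|$, and each assignment is evaluated in polynomial time; iterating over the (polynomially many, polynomially nested) subformulas keeps the total within deterministic exponential time. Hence satisfiability of \UFAm{} is in \NExpTime{}.

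For \NExpTime{}-hardness I would use the inclusion noted right after the definition of \UFAm{}: \UFAm{} contains all of s\UF{}, and in particular the whole two-variable fragment \FOt{} without equality. The satisfiability problem for \FOt{} without equality is already \NExpTime{}-hard by the lower bound of Lewis~\cite{Lew80} recalled in the introduction (the relevant reduction uses only binary relations and no equality), so the identity embedding transfers this lower bound to \UFAm{}. Combining the upper and lower bounds gives \NExpTime{}-completeness. I do not anticipate any real obstacle: the whole argument is immediate from Theorem~\ref{t:fmp} together with $\FOt\subseteq\UFAm$, the only care needed being the routine estimate above on model-checking with long quantifier blocks.
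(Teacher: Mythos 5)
Your proof is correct and follows essentially the same route as the paper: the upper bound is obtained by guessing an exponentially bounded model (guaranteed by Theorem~\ref{t:fmp}) and verifying it by exhaustive substitution, and the lower bound is inherited from Lewis's \NExpTime{}-hardness of equality-free \FOt{}, which embeds into \UFAm{}. The only cosmetic difference is that the paper first passes to normal form via Lemma~\ref{l:nf2} before guessing the model, whereas you model-check the original sentence directly; your explicit estimate that $|B|^{m}$ stays exponential despite unbounded quantifier blocks covers the one point that needs care there.
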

\begin{proof}
The lower bound is inherited from the lower bound for \FOt{} \cite{Lew80}.
Let us turn to the upper bound.

By Lemma \ref{l:nf2} it suffices to show how to decide satisfiability of a normal form sentence $\varphi$. By Theorem \ref{t:fmp} if $\varphi$ is satisfiable then it has a model with exponentially
bounded domain. We guess some  natural description of such a model $\str{A}$. We note
that this description is also of exponential size with respect to $|\varphi|$: Indeed, we need to describe some number (linearly bounded in $|\varphi|$)
of relations of arity at most $|\varphi|$, and it is straightforward, taking into consideration the size of the domain, that a description of a single such relation is at most exponential in $|\varphi|$.
A verification of a single normal form conjunct in the guessed structure can be done in an exhaustive way, by considering all possible substitutions
for the variables.

Alternatively, instead of guessing a model one could guess a satisfaction forest for $\varphi$. Again, a routine inspection reveals that the size of its
description can be bounded exponentially in $|\varphi|$; also the verification of the properties (T1)-(T5), (F1)-(F4) would not be problematic.
\end{proof}

\section{Infinity axiom with free use of equality}

In this section we note that allowing for free use of equality in our logic changes the situation significantly: we lose the finite model property. We recall that in the case of \UF{} free use of equality does not spoil the decidability and even does not change the complexity.

In the recent paper \cite{FK23} we note that the fragment with arbitrary blocks of quantifiers  \UFA{}  and with free use of equality contains infinity axioms (satisfiable formulas without finite models), by constructing the following three-variable formula:
$$\exists x S(x) \wedge \forall x \exists y \forall z ( \neg S(y) \wedge R(x,y,z)  \wedge (x=z \vee \neg R(z,y,x))),$$
which has no finite models but is satisfied in the model whose universe is the set of natural numbers, $S$ is true only at $0$ and $Rxyz$ is 
true iff $y=x+1$.

The above example can be simply adapted to the case of \UFAm{} with free use of equality. We just add a dummy existentially quantified variable $t$ and
require it to be equal to the previous, universally quantified variable $z$. To accommodate all the variables we increase the arity of $R$ by $2$ (one can think
that the first and the last position of $R$ from the previous example have been doubled):

$$\exists x S(x) \wedge \forall x \exists y \forall z \exists t. (t=z \wedge \neg S(y) \wedge R(x,x,y,z,t)  \wedge (x=z \vee \neg R(z,t,y,x,x))).$$

\section{Conclusions}
We identified a non-trivial uniform one-dimensional logic in which a use of mixed blocks of quantifiers is allowed,
strictly extending the two-variable fragment \FOt{} without equality and the previously defined fragment \sUF{} without equality.
We proved that, similarly to  \FOt{} and s\UF{},  this logic has the finite, exponential model property and \NExpTime-complete satisfiability problem.

There are two interesting directions, orthogonal to each other, in which it would be 
valuable to extend our work. The first is investigating the decidability, complexity
and the status of the finite model property for full \UFA{} without equality,
that is to see what happens to our logic if arbitrary blocks of quantifiers, possibly ending with the universal quantifier,
are allowed. As already mentioned, in our recent work \cite{FK23} we answered this question for the three variable restriction, \UFAthree{}, of \UFA{} by
showing the exponential model property and \NExpTime-completeness of its satisfiability problem.

The second idea is to revive the research on Maslov Class $\bar{\text{K}}$, by attempting to determine the precise complexity of its satisfiability problem
and investigating whether it has the finite model property. When designing the fragment \UFAm{} we took some inspiration from
the definition of $\bar{\text{K}}$, and indeed we were able to  reduce satisfiability of the former to the latter. We believe that 
what we have learned working on \UFAm{} will prove useful in the case of $\bar{\text{K}}$.

There are also some probably slightly less attractive, but still interesting, a bit more technical questions that one can try to answer. For example,
what happens to our logic if a use of equalities/inequalities (free or uniform) or constants is allowed.

\section*{Acknowledgement} This work is supported by  NCN grant No. 2021/41/B/ ST6/00996.

\bibliographystyle{eptcs}
\bibliography{bibshort}

\end{document}